\newtheorem{definition}{\textbf{Definition}}
\newtheorem{lemma}{\textbf{Lemma}}
\newtheorem{theorem}{\textbf{Theorem}}
\newcommand{\nn}{\nonumber}
\newcommand{\mE}{\mathrm{E}}
\newcommand{\Var}{\mathsf{Var}}
\newcommand{\cX}{\mathcal{X}}
\newcommand{\cY}{\mathcal{Y}}
\newcommand{\cZ}{\mathcal{Z}}
\DeclareMathAlphabet{\matheuf}{U}{euf}{m}{n}
\begin{document}

\vspace*{-2cm}

\begin{center}
  \baselineskip 1.3ex {\Large \bf The Capacity Region of the Source-Type Model for Secret Key and Private Key Generation
\footnote{The work of H. Zhang and Y. Liang was supported by a National Science Foundation
CAREER Award under Grant CCF-10-26565 and by the National Science Foundation under Grant CNS-11-16932. The work of L. Lai was supported by a National Science Foundation CAREER Award under Grant CCF-13-18980 and by the National Science Foundation under Grant CNS-13-21223.}\\
}
 \vspace{0.15in} Huishuai Zhang, Lifeng Lai, Yingbin Liang, Hua Wang
\footnote{H. Zhang and Y. Liang are with the Department of Electrical
Engineering and Computer Science, Syracuse University, Syracuse, NY 13244 USA (email: \{hzhan23,yliang06\}@syr.edu). L. Lai is with the Department of Electrical and Computer Engineering, Worcester Polytechnic Institute, Worcester, MA 01609 USA (email: llai@wpi.edu). H. Wang is with Qualcomm Inc., Bridgewater, NJ 08807 USA (email: huaw@qti.qualcomm.com) }
\end{center}

\begin{abstract}
The problem of simultaneously generating a secret key (SK) and private key (PK) pair among three terminals via public discussion is investigated. In this problem, each terminal observes a component of correlated sources. All three terminals are required to generate the common SK to be concealed from an eavesdropper that has access to the public discussion, while two designated terminals are required to generate an extra PK to be concealed from both the eavesdropper and the remaining terminal. An outer bound on the SK-PK capacity region was established by Ye and Narayan in \cite{Ye05con}, and was shown to be achievable for a special case. In this paper, the SK-PK capacity region is established in general by developing schemes to achieve the outer bound for the remaining two cases. The main technique lies in the novel design of a random binning-joint decoding scheme that achieves the existing outer bound.
\end{abstract}


\section{Introduction}\label{sec:introduction}
The problem of secret key generation via public discussion under the source model was initiated by~\cite{Csiszar93,Maurer93}, which established a remarkable fact that two terminals, each possessing correlated but not exactly the same observations, can establish a shared secret key by only talking to each other in the public. In the basic source-type model considered in \cite{Csiszar93,Maurer93}, there are two legitimate terminals, who observe correlated source sequences and can communicate with each other through a public channel, and eavesdroppers, who have perfect access to the public channel. The main observation is that, because of the correlation, terminal $\mathcal{X}$ can recover terminal $\mathcal{Y}$'s source sequence by letting terminal $\mathcal{Y}$ send limited amount of information using distributed source coding technique~\cite{Slepian73}. Then both terminal $\mathcal{X}$ and terminal $\mathcal{Y}$ can generate a shared secret key based on terminal $\mathcal{Y}$'s source sequence subtracting the information that has been revealed. The close connection between the distributed source coding and secret key generation also holds on more general source-type models~\cite{Csiszar04}. In particular,~\cite{Csiszar04} studied a general network with multiple terminals, in which a subset of terminals need to generate a shared secret key. \cite{Csiszar04} showed that the secret key capacity is equal to the joint entropy of all source observations subtracting the minimum amount of information needed to enable the subset of terminals to recover all source observations.

Until now, with few exceptions to be discussed in the sequel, most of the existing studies focused on generation of a single key~\cite{Maurer99, Csiszar00, Maurer03a, Maurer03b, Maurer03c, Csiszar04}. However, there are various practical scenarios in which multiple keys need to be simultaneously generated. For instance, a number of terminals can have different security clearance levels, and each terminal is allowed to access confidential documents up to its own clearance level. Terminals with the same clearance level should share the same key, and should be kept ignorant of higher level keys.

There have been several existing studies that addressed generation of multiple keys~\cite{Ye05con,Lai13,Lai:ITW:12, Ye05dis}. Being of particular interest to us, Ye and Narayan studied a multi-key source-type model in~\cite{Ye05con}, in which three terminals (say terminals $\mathcal{X}, \mathcal{Y}$ and $\mathcal{Z}$) observe correlated source sequences, and wish to generate a common secret key (SK) among all of them, which should be concealed from eavesdroppers, and a private key (PK) between $\mathcal{X}$ and $\mathcal{Y}$ that should be concealed from $\mathcal{Z}$ and eavesdroppers.~\cite{Ye05con} provided both outer and inner bounds on the SK-PK capacity region. In particular, the outer bound has three different forms corresponding respectively to three cases of correlations among the sources. In \cite{Ye05con}, it was shown that the outer bound is achievable for one case, and hence the SK-PK capacity region was established for this case. However, for the other two cases, there are gaps between the outer bound and the inner bound derived based on the scheme developed in~\cite{Ye05con}. Finding schemes to achieve the outer bound for the other two cases was left as an open problem in \cite{Ye05con}. In fact, the outer bound in the other two cases suggests the necessity of a scheme such that $\mathcal{X}$ helps $\mathcal{Y}$ to recover $\mathcal{Z}$'s information without revealing any more information of $\mathcal{Z}$ to public. This is the major technical challenge to obtain the SK-PK capacity region in general.

Our main contribution in this paper lies in finding schemes that achieve the outer bound for the other two cases for the SK-PK source-type model in \cite{Ye05con}. Then, combined with the result in \cite{Ye05con} for the first case, the full SK-PK capacity region is established. In order to address the technical challenge mentioned above, we design schemes such that terminal $\mathcal{X}$ helps to improve the quality of the side information at $\mathcal{Y}$ in recovering $\mathcal{Z}$'s information rather than directly revealing information of $\mathcal{Z}.$



The paper is organized as follows. Section \ref{sec:model} contains the model description. Section \ref{sec:mainresults} presents our main results on the SK-PK capacity region. Section \ref{sec:proof2} and \ref{sec:proof3} consist of the proofs of the main theorem for the two cases, respectively. Section \ref{sec:conclusion} provides some concluding remarks.

\section{System Model}\label{sec:model}

Consider a discrete memoryless source, whose outputs at each time instant are generated based on the joint distribution of random variables $(X, Y, Z)$ with corresponding alphabets $(\mathcal{X, Y, Z})$. We consider a system with three terminals $(\mathcal{X, Y, Z})$ and an eavesdropper. Here, we use the alphabet symbols to denote the terminals. Terminal $\mathcal{X}$ observes $n$ independent and identically distributed~(i.i.d.) repetitions of $X$, i.e., $X^n=(X_1,\cdots, X_n)$, and terminals $\mathcal{Y}$ and $\mathcal{Z}$ observe $Y^n=(Y_1,\cdots, Y_n)$ and $Z^n=(Z_1,\cdots, Z_n)$, respectively. We assume that the eavesdropper does not have source observations and terminals are allowed to communicate with each other over a public noiseless channel with no rate constraint. We further assume that all transmissions over the public channel are observable to all parties including the eavesdropper. The public discussion can be interactive. Without loss of generality, we assume that terminals $(\mathcal{X, Y, Z})$ take turns to transmit for $r$ rounds over $3r$ consecutive time slots. We use $3r$ random variables $F_1,\cdots,F_{3r}$ to denote these transmissions, where $F_t$ denotes the transmission in time slot $t$ for $1\le t \le 3r$. The transmission $F_t$ can be any function of its own observation and all previous transmissions $F_{[1,t-1]}=(F_1,\cdots, F_{t-1})$. We use $\mathbf{F}=(F_1,\cdots, F_{3r})$ to denote all transmissions in $3r$ time slots. Furthermore, we note that although our result in this paper also holds for the case that allows additional randomization at each terminal, we do not explicitly allow such randomization in our model for simplicity.

In this system (see Fig.~\ref{fig:systemmodel}), terminals $\mathcal{X, Y}$ and $\mathcal{Z}$ wish to generate a common secret key $K_S$, which is required to be kept secure from the eavesdropper (that has access to only the public discussion). Furthermore, terminals $\mathcal{X}$ and $\mathcal{Y}$ wish to generate a private key $K_P$, which is required to be kept secure not only from the eavesdropper but also from terminal $\mathcal{Z}$. 
\begin{figure}[thb]
	\centering
	\includegraphics[width=3.7in]{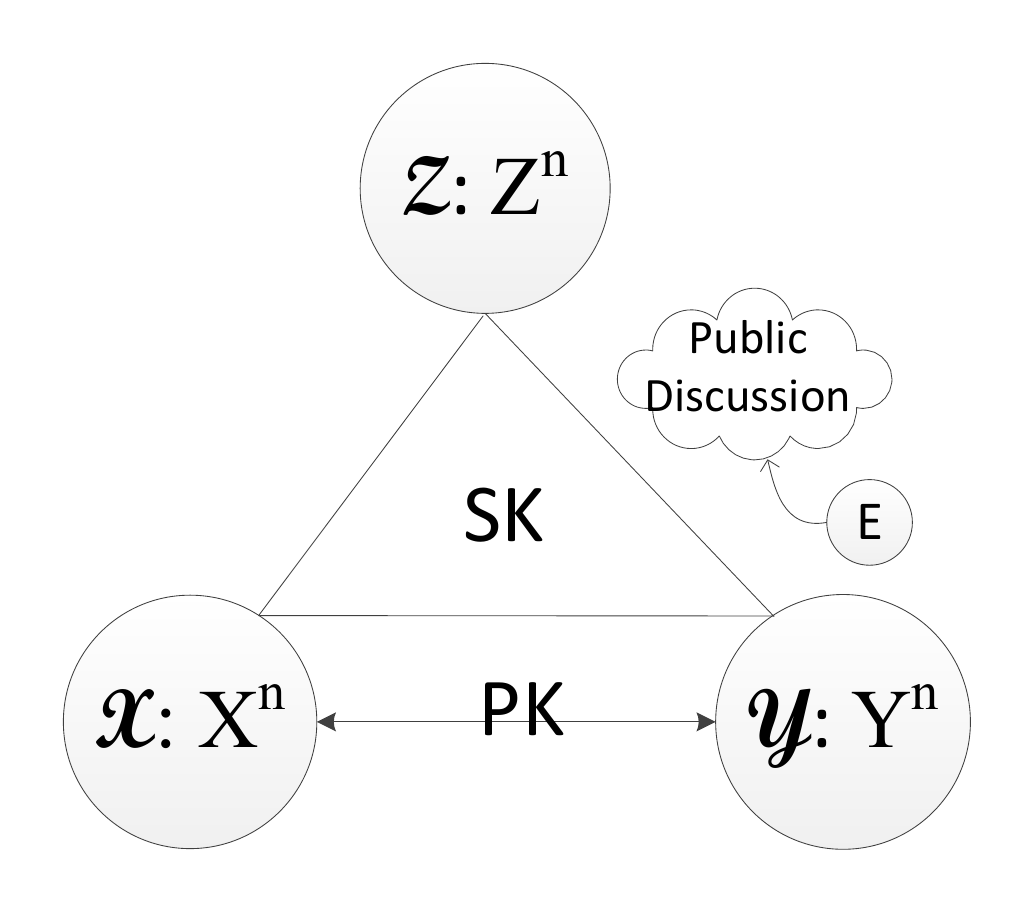}\\
	\caption{System model}
    	\label{fig:systemmodel}
\end{figure}

We next introduce the mathematical definition of the secret key and the private key. A random variable $U$ is said to be $\epsilon$-{\em recoverable} from another random variable $V$, if there exists a function $f$ such that
\begin{equation} \label{eq:eps}
	\Pr\{U\neq f(V)\}<\epsilon.
\end{equation}

\begin{definition}
A pair $(K_S, K_P)$ is said to be an $\epsilon$-(SK, PK) if $K_S$ and $K_P$ satisfy the following requirements.

$\bullet$ $K_S$ is $\epsilon$-{\em recoverable} at each of the three terminals with the public transmission $\mathbf{F}$, i.e., it can be $\epsilon$-{\em recoverable} from $(X^n, \mathbf{F})$, $(Y^n, \mathbf{F})$ and $(Z^n, \mathbf{F})$, respectively;

$\bullet$ $K_P$ is $\epsilon$-{\em recoverable} at terminals $\mathcal{X}$ and $\mathcal{Y}$ with public transmission $\mathbf{F}$, i.e., it can be $\epsilon$-{\em recoverable} from $(X^n, \mathbf{F})$ and $(Y^n, \mathbf{F})$, respectively;

$\bullet$ $K_S$ and $K_P$ satisfy the secrecy condition
\begin{flalign}
	&\frac{1}{n}I(K_S;\mathbf{F})<\epsilon, \label{eq:SKsecrecy}\\
	&\frac{1}{n}I(K_P;\mathbf{F},Z^n)<\epsilon \label{eq:PKsecrecy}
\end{flalign}
for large enough $n$, where $\epsilon$ can be arbitrarily small; and

$\bullet$ $K_S$ and $K_P$ satisfy the uniformity condition
\begin{flalign}
	\frac{1}{n}H(K_S)\ge \frac{1}{n}\log |\mathcal{K_S}|-\epsilon, \label{eq:SKuniform} \\
	\frac{1}{n}H(K_P)\ge \frac{1}{n}\log |\mathcal{K_P}|-\epsilon,
\end{flalign}
for large enough $n$, where $|\mathcal{K_S}|$ and $|\mathcal{K_P}|$ denote the alphabet sizes of the random variable $K_S$ and $K_P$, respectively.
\end{definition}
%

We note that the secrecy conditions \eqref{eq:SKsecrecy} and \eqref{eq:PKsecrecy} are in the weak sense, and can be strengthened to the strong sense without loss of performance as in \cite{Maurer00}.

\begin{definition}
A rate pair $(R_S, R_P)$ is said to be an achievable SK-PK rate pair if for every $\epsilon>0$, $\delta>0$, and for sufficiently large $n$, there exists an $\epsilon$-(SK,PK) pair $(K_S^{(n)},K_P^{(n)})$ such that
 \begin{flalign}
	\frac{1}{n} H(K_S^{(n)})>R_S-\delta, \ \ \ \ \ \ \ \ \ \ \ \frac{1}{n} H(K_P^{(n)})>R_P-\delta.
\end{flalign}
\end{definition}
Our goal is to characterize the {\em SK-PK capacity region} that contains all achievable rate pairs $(R_S,R_P)$.

\section{Main Results}\label{sec:mainresults}

\subsection{Preliminaries}

The model introduced in Section \ref{sec:model} has been studied by Ye and Narayan in \cite{Ye05con}, which provided outer and inner bounds on the SK-PK capacity region (see Chapter 3 in \cite{Ye05dis} for more details). We cite the outer bound in \cite{Ye05con} below, which is useful for presenting our results in the next subsection. For notational convenience, we define
\begin{flalign}
	R_A&:=I(Z;XY), \label{eq:a} \\
	R_B&:=\min\{I(X;YZ), I(Y;XZ)\},  \\
	R_C&:=\frac{1}{2}(H(X)+H(Y)+H(Z)-H(X,Y,Z)). \label{eq:c}
\end{flalign}
\begin{theorem}~\cite{Ye05con}\label{th:CapReg}
An outer bound on the SK-PK capacity region for the model in Section \ref{sec:model} contains the rate pairs $(R_S,R_P)$ satisfying
	\begin{flalign}
		&R_S\le R_A,\label{eq:outbound1}\\
		&R_P \le I(X;Y|Z),\label{eq:outbound2}\\
		&R_S+R_P \le R_B,\label{eq:outbound3}\\
		&2R_S+R_P \le 2R_C.\label{eq:outbound4}
	\end{flalign}
where the constants $R_A, R_B$ and $R_C$ are defined in \eqref{eq:a}-\eqref{eq:c}.
\end{theorem}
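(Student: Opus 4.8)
The statement is a converse, so the plan is to show that every $\epsilon$-(SK,PK) pair must satisfy the four inequalities up to $o(n)$ slack and then let $\epsilon,\delta\to0$. Three tools will be used repeatedly. First, Fano's inequality turns each recoverability requirement into a bound such as $H(K_S\mid X^n,\mathbf{F})\le n\delta_n$ (and the analogous ones at $\mathcal{Y},\mathcal{Z}$ for $K_S$, at $\mathcal{X},\mathcal{Y}$ for $K_P$), with $\delta_n\to0$. Second, the secrecy constraints give $I(K_S;\mathbf{F})\le n\epsilon$ and $I(K_P;\mathbf{F},Z^n)\le n\epsilon$. Third, I will use the i.i.d. structure together with a lemma on interactive communication: if the transmitting terminals can be split into two groups, after possibly conditioning on a source $W^n$ that makes the remaining terminal's messages deterministic, then $I(S_1^n;S_2^n\mid W^n,\mathbf{F})\le I(S_1^n;S_2^n\mid W^n)=nI(S_1;S_2\mid W)$; this is proved by peeling off the last transmission and using that each message is a deterministic function of its originator's observation and the past. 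I will also record two preliminary facts: since $K_S$ is a (near) function of $(Z^n,\mathbf{F})$, the pair $K=(K_S,K_P)$ inherits $I(K;\mathbf{F})\le n\epsilon'$ and $I(K_S;K_P)\le n\delta_n$ from the secrecy of $K_P$ against $(\mathbf{F},Z^n)$.

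The first three inequalities follow from one template: starting from a key entropy, drop the public term by secrecy and the residual term by Fano at one recovering terminal, which leaves a conditional mutual information $I(\cdot\,;\cdot\mid\mathbf{F})$ between grouped sources; a second Fano step at another recovering terminal replaces the key by that source, after which the interactive lemma single-letterizes. Concretely, \eqref{eq:outbound1} reduces to $H(K_S)\le I(Z^n;X^nY^n\mid\mathbf{F})\le nI(Z;XY)=nR_A$ (group $\{\mathcal{X},\mathcal{Y}\}$ versus $\mathcal{Z}$); \eqref{eq:outbound2} reduces to $H(K_P)\le I(X^n;Y^n\mid Z^n,\mathbf{F})\le nI(X;Y\mid Z)$, where conditioning on $Z^n$ uses the stronger secrecy of $K_P$ and makes $\mathcal{Z}$'s messages deterministic so that $\mathcal{X}$ and $\mathcal{Y}$ form the two groups; and \eqref{eq:outbound3}, after $H(K_S)+H(K_P)\le H(K)+n\delta_n$, reduces to $H(K)\le I(Y^n;X^nZ^n\mid\mathbf{F})\le nI(Y;XZ)$ and symmetrically to $nI(X;YZ)$, giving the minimum $R_B$.

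The last inequality \eqref{eq:outbound4} is the crux and the main obstacle. It cannot be obtained by adding the first three, because $2R_C=I(X;YZ)+I(Y;XZ)-I(X;Y\mid Z)$ carries a negative coefficient. Instead I bound $2H(K_S)+H(K_P)$ directly. Writing $2H(K_S)+H(K_P)\le H(K_S)+H(K)+n\delta_n$, the same template gives $H(K_S)\le I(Z^n;X^nY^n\mid\mathbf{F})$ (recoverability of $K_S$ at $\mathcal{X}$ and $\mathcal{Z}$) and $H(K)\le I(X^n;Y^n\mid\mathbf{F})$ (recoverability of $K$ at $\mathcal{X}$ and $\mathcal{Y}$, secrecy against $\mathbf{F}$), so that
\[
2H(K_S)+H(K_P)\le I(Z^n;X^nY^n\mid\mathbf{F})+I(X^n;Y^n\mid\mathbf{F})+n\delta_n=\Delta+n\delta_n,
\]
where $\Delta:=H(X^n\mid\mathbf{F})+H(Y^n\mid\mathbf{F})+H(Z^n\mid\mathbf{F})-H(X^nY^nZ^n\mid\mathbf{F})$. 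The decisive step is to refrain from single-letterizing the two terms separately: the term $I(X^n;Y^n\mid\mathbf{F})$ alone cannot be single-letterized, since the messages of $\mathcal{Z}$ inject correlation between $X^n$ and $Y^n$ that the interactive lemma cannot strip away. Their sum $\Delta$, however, does single-letterize: because $\mathbf{F}$ is a deterministic function of the sources and, for interactive communication, $I(X^n;\mathbf{F})+I(Y^n;\mathbf{F})+I(Z^n;\mathbf{F})\ge H(\mathbf{F})$ (each transmitted symbol is fully counted in its originator's term), one obtains $\Delta=2nR_C-\big[I(X^n;\mathbf{F})+I(Y^n;\mathbf{F})+I(Z^n;\mathbf{F})-H(\mathbf{F})\big]\le 2nR_C$, which yields \eqref{eq:outbound4}.

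Finally, collecting the $o(n)$ terms, dividing by $n$, and using $\tfrac1nH(K_S)\ge R_S-\delta$ and $\tfrac1nH(K_P)\ge R_P-\delta$, then letting $\epsilon,\delta\to0$, delivers the four inequalities. I expect the only delicate points to be the careful bookkeeping of the vanishing terms and the justification of the interactive-communication lemma in both its grouped form (for \eqref{eq:outbound1}, \eqref{eq:outbound3}) and its conditional form (for \eqref{eq:outbound2}); the genuinely new idea lies in combining the two mutual-information terms into $\Delta$ before single-letterizing, which is what makes \eqref{eq:outbound4} provable.
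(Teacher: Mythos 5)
This theorem is not proved in the paper at all: it is the outer bound of Ye and Narayan, imported verbatim from \cite{Ye05con}, and the paper's own contribution is the matching achievability. So there is no in-paper proof to compare against; I can only assess your converse on its merits, and it checks out. The template for \eqref{eq:outbound1}--\eqref{eq:outbound3} (secrecy $+$ Fano at one recovering terminal to reach a conditional mutual information given $\mathbf{F}$, a second Fano step to replace the key by a source block, then the interactive-communication lemma in grouped or conditional form) is the standard Maurer/Ahlswede--Csisz\'ar machinery and each application is legitimate: for \eqref{eq:outbound2} the conditioning on $Z^n$ is justified both by the stronger secrecy of $K_P$ and by the fact that $\mathcal{Z}$'s transmissions become deterministic given $(Z^n,F_{[1,t-1]})$, and for \eqref{eq:outbound3} the step $H(K_S)+H(K_P)\le H(K_S,K_P)+n\delta_n$ correctly uses $I(K_S;K_P)\le I(K_P;\mathbf{F},Z^n)+H(K_S\mid Z^n,\mathbf{F})$. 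The one genuinely nontrivial step is \eqref{eq:outbound4}, and your identity $I(Z^n;X^nY^n\mid\mathbf{F})+I(X^n;Y^n\mid\mathbf{F})=\Delta=2nR_C-\bigl[\textstyle\sum_{S}I(S^n;\mathbf{F})-H(\mathbf{F})\bigr]$ is correct, as is the inequality $\sum_{S}I(S^n;\mathbf{F})\ge H(\mathbf{F})$: writing $H(\mathbf{F})=\sum_t H(F_t\mid F_{[1,t-1]})$ and noting that when terminal $\mathcal{S}$ transmits at slot $t$ one has $H(F_t\mid F_{[1,t-1]})=I(S^n;F_t\mid F_{[1,t-1]})$, each summand is absorbed into its originator's chain-rule expansion of $I(S^n;\mathbf{F})$ with the remaining terms nonnegative. (This uses that $\mathbf{F}$ is a deterministic function of the sources, which the model guarantees since no local randomization is allowed; with randomization one would condition on it throughout.) Your diagnosis that $I(X^n;Y^n\mid\mathbf{F})$ cannot be single-letterized in isolation, and that the sum $\Delta$ must be formed first, is exactly the right observation --- it is the three-terminal analogue of the fractional-partition bound underlying $R_C$ in \cite{Csiszar04}. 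The only bookkeeping I would insist you write out is the repeated use of ``$K$ is a near-function of $(X^n,\mathbf{F})$'' to justify $I(K;\,\cdot\mid\mathbf{F})\le I(X^n\cdots;\,\cdot\mid\mathbf{F})+n\delta_n$, which follows from $I(K;V\mid U,\mathbf{F})\le H(K\mid U,\mathbf{F})\le n\delta_n$ whenever $U$ contains $X^n$; with that spelled out the argument is complete.
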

It is instructional to first note a few observations about the above outer bound.

1. If we dedicate to generate the private key $K_P$ without considering the secret key $K_S$, then the model becomes the private key model studied in \cite{Csiszar04}. The outer bound on $R_P$ is \eqref{eq:outbound2}, which can be achieved by letting terminal $\mathcal{Z}$ reveal all its information to public. Here terminal $\mathcal{Z}$ is curious but honest, and helps to generate the private key.

2. If we dedicate to generate the secret key $K_S$ without considering the private key $K_P$, then the model reduces to the secret key model studied in \cite{Csiszar04}. Correspondingly the above outer bound reduces to $R_S\le \min\{R_A,R_B,R_C\}$ based on \eqref{eq:outbound1}, \eqref{eq:outbound3} and \eqref{eq:outbound4}. According to \cite{Csiszar04}, this bound is achievable by applying the ``omniscience'' scheme, which requires each terminal recover the sources of all three terminals after the public discussion.

3. The sum rate bound \eqref{eq:outbound3} can be viewed as a cut-set type bound, because both $\mathcal{X}$ and $\mathcal{Y}$ need to generate two keys $K_S$ and $K_P$ simultaneously.


We next further explain the above outer bound in detail. We note that this outer bound can take three different structures corresponding respectively to the following three cases: {\bf case 1} with $R_B=\min\{R_A, R_B, R_C\}$, {\bf case 2} with $R_C=\min\{R_A, R_B, R_C\}$, and {\bf case 3} with $R_A=\min\{R_A, R_B, R_C\}$.

For case 1, it was shown in \cite{Ye05con} that the outer bound (as illustrated in Fig.~\ref{fig:case1}) is achievable. It is clear that the point B with the rate coordinates $(R_B,0)$ is achievable by applying the ``omniscience" scheme in\cite{Csiszar04} and the point E with the rate coordinates $(0, I(X;Y|Z))$ is achievable by letting $\mathcal{Z}$ reveal all of its information to public. The corner point T with the rate coordinates $(R_B-I(X;Y|Z), I(X;Y|Z))$ is shown to be achievable in \cite{Ye05con}. The idea is to let $\mathcal{Z}$ reveal information at rate $R_\mathcal{Z}=\max\{H(Z|X), H(Z|Y)\}$ so that both $\mathcal{X}$ and $\mathcal{Y}$ can recover $Z^n$ correctly with probability close to 1. Now $Z^n$ is the information shared by three terminals, and hence the secret key $K_S$ can be generated based on $Z^n$ with rate $R_S=H(Z)-R_\mathcal{Z}=\min\{I(X;Z), I(Y;Z)\}$. Then, given $Z^n$, terminals $\mathcal{X}$ and $\mathcal{Y}$ can generate a private key with rate $R_P=I(X;Y|Z)$ if terminal $\mathcal{X}$ reveals information at rate $R_\mathcal{X}=H(X|YZ)$ to terminal $\mathcal{Y}$. Finally, the entire outer bound can be achieved by time-sharing scheme.
\begin{figure}[thb]
	\centering
	\includegraphics[width=3.7in]{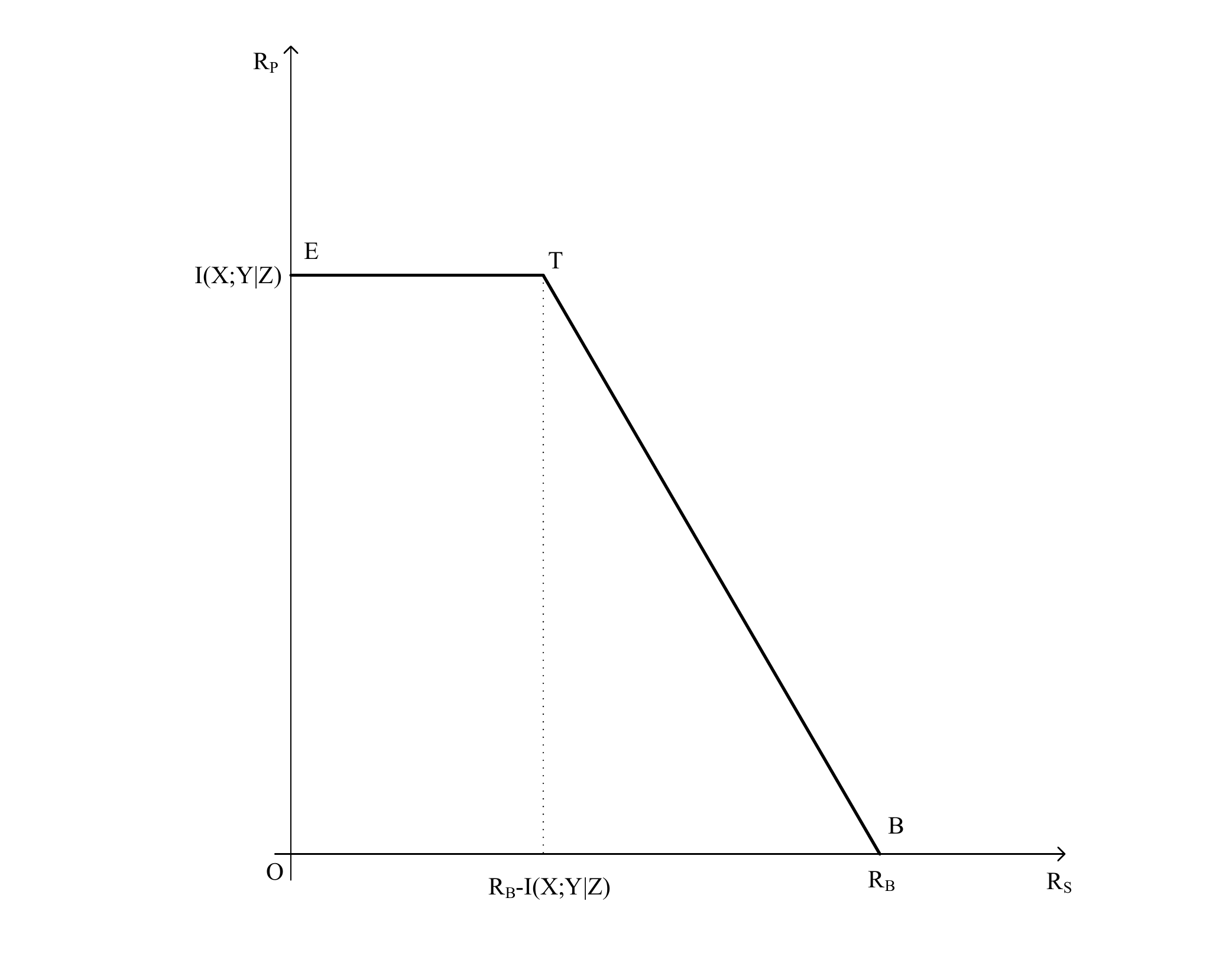}\\
	\caption{Out bound for case 1: the quadrangle O-E-T-B-O}
    	\label{fig:case1}
\end{figure}

In this paper, we show that the outer bound can be achieved for cases 2 and 3. Thus, this outer bound is the SK-PK capacity region in general.

\subsection{Main Theorem}\label{ssec:mainth}

Our main contribution in this paper lies in finding schemes that achieve the outer bound in Theorem \ref{th:CapReg} for cases 2 and 3. Thus, combined with the result in \cite{Ye05con} for case 1, the SK-PK capacity region is established in general. We provide our main result in the following theorem.
\begin{theorem}\label{th:TightTh}
The outer bound in Theorem \ref{th:CapReg} is achievable for cases 2 and 3, and hence is the SK-PK capacity region for the model given in Section \ref{sec:model} in general.
\end{theorem}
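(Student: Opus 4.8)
The plan is to prove achievability by reducing Theorem~\ref{th:TightTh} to the achievability of the finitely many extreme points of the outer-bound polytope of Theorem~\ref{th:CapReg}. Since a time-sharing argument shows the set of achievable $(R_S,R_P)$ pairs to be convex, and the outer bound is a closed convex polytope, it suffices to realize each of its vertices. Two vertices are reachable by known constructions in both case~2 and case~3: the point on the $R_S$-axis, with $R_S=\min\{R_A,R_B,R_C\}$ and $R_P=0$, is attained by the ``omniscience'' scheme of \cite{Csiszar04}, and the point $(0,I(X;Y|Z))$ on the $R_P$-axis is attained by letting $\mathcal{Z}$ disclose $Z^n$ so that $\mathcal{X}$ and $\mathcal{Y}$ distill a private key at rate $I(X;Y|Z)$. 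The work therefore reduces to the one or two interior vertices at which a (weighted) sum-rate face meets an individual-rate face; these are precisely the corner points that the inner bound of \cite{Ye05con} fails to reach, and on which I would concentrate.

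At such a vertex I would run a three-part public discussion organized as a random binning--joint decoding scheme, described in detail for case~2, with case~3 following the same template after adjusting the terminal roles and the binning rates. Label the terminals so that $\mathcal{Y}$ is the one whose correlation to $Z$ is weaker. First, $\mathcal{Z}$ broadcasts a Slepian--Wolf bin index of $Z^n$ at a rate $R_{\mathcal{Z}}$ chosen so that $\mathcal{X}$ can reconstruct $Z^n$ with side information $X^n$, while $\mathcal{Y}$, with side information $Y^n$ alone, cannot yet decode. Second, and this is the new ingredient, $\mathcal{X}$ broadcasts an auxiliary bin index of its \emph{own} observation $X^n$ (through an auxiliary codebook), whose sole purpose is to enrich the side information available at $\mathcal{Y}$ rather than to forward any function of $Z^n$. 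Third, $\mathcal{Y}$ applies a joint typicality decoder that uses $Y^n$ together with both public indices simultaneously to recover $Z^n$. The secret key $K_S$ is then extracted from $Z^n$, and the private key $K_P$ from $(X^n,Y^n)$ given $Z^n$, in each case by a privacy-amplification (balanced-coloring) map applied to the residual randomness left after accounting for the public transmissions $\mathbf{F}$.

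With this construction the four families of requirements in the Definition are verified in turn. Recoverability of $K_S$ and $K_P$ follows once both decoders succeed, which holds provided $R_{\mathcal{Z}}$ and the rate of $\mathcal{X}$'s auxiliary index exceed the relevant conditional entropies -- a Slepian--Wolf / joint-typicality estimate. The uniformity requirements, namely \eqref{eq:SKuniform} for $K_S$ and its analogue for $K_P$, together with the secrecy conditions \eqref{eq:SKsecrecy}--\eqref{eq:PKsecrecy}, are obtained from a leftover-hash / balanced-coloring argument: one shows that, after subtracting the total bit-length of $\mathbf{F}$, the distilled $K_S$ is nearly uniform and nearly independent of $\mathbf{F}$, while $K_P$ is nearly uniform and nearly independent of $(\mathbf{F},Z^n)$. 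Equating the resulting key rates -- $R_S=H(Z)-R_{\mathcal{Z}}$ net of leakage and the corresponding expression for $R_P$ -- to the coordinates of the target vertex then fixes the binning rates.

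I expect the main obstacle to be the simultaneous reliability-and-secrecy constraint on $\mathcal{X}$'s auxiliary message, which must thread a narrow window. Its rate has to be large enough for $\mathcal{Y}$'s joint decoder to lock onto $Z^n$, yet small enough and so structured that (i) it leaks no further information about $Z$ to the eavesdropper, for otherwise the usable $K_S$ rate would fall below the value dictated by $R_A=I(Z;XY)$ and \eqref{eq:SKsecrecy} would be violated, and (ii) it does not erode the conditional common randomness $I(X;Y|Z)$ feeding $K_P$, even though the message is a deterministic function of $X^n$ and is observed by $\mathcal{Z}$, so that the private-key secrecy \eqref{eq:PKsecrecy} against $\mathcal{Z}$ is preserved. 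Proving that this window is nonempty at exactly the target vertex -- that $\mathcal{X}$ can improve $\mathcal{Y}$'s side information \emph{without} revealing more of $Z$ -- is the technical heart of the argument; once it is established for case~2, the remaining interior vertex of case~3 follows by relabeling, and the entire region is filled in by time sharing.
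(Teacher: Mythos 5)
Your plan for case~2 is essentially the paper's own proof: reduce to the corner points by time sharing, and achieve the new corner P by having $\mathcal{Z}$ bin $Z^n$ at a rate ($H(Z|X)+\epsilon$) that only $\mathcal{X}$ can decode, having $\mathcal{X}$ broadcast a random bin index of its \emph{own} sequence $X^n$ (at rate $H(XZ|Y)-H(Z|X)$) so that $\mathcal{Y}$ can jointly-typicality-decode the pair $(X^n,Z^n)$, and extracting $K_S$ and $K_P$ as independently assigned sub-bin indices of $Z^n$ and $X^n$ respectively; the secrecy and uniformity claims are then settled by exactly the kind of bin-occupancy (second-moment) estimate you allude to, which the paper isolates as Lemma~\ref{le:1}. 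Two small inaccuracies in your setup: the new vertices are the intersections of the two sum-rate faces ($R_S+R_P\le R_B$ with $2R_S+R_P\le 2R_C$ for P) and of $2R_S+R_P\le 2R_C$ with $R_S\le R_A$ (for Q), not of a sum-rate face with an individual-rate face --- the vertex T where $R_P\le I(X;Y|Z)$ meets $R_S+R_P\le R_B$ was already reached in \cite{Ye05con}.

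The genuine gap is your claim that the remaining case-3 vertex Q ``follows by relabeling.'' It does not. At Q the secret-key rate is $R_S=I(Z;XY)$, which forces $\mathcal{Z}$'s public rate down to essentially $H(Z|XY)$; at that rate \emph{neither} $\mathcal{X}$ nor $\mathcal{Y}$ can decode $Z^n$ from $f$ and its own observation alone, so the first step of your template (``$\mathcal{X}$ reconstructs $Z^n$ from $(X^n,f)$, then helps $\mathcal{Y}$'') breaks down no matter how the terminals are labeled. The paper's case-3 scheme adds a third public message: $\mathcal{Y}$ also broadcasts a bin index $l(Y^n)$ at rate about $H(Y|X)$, and $\mathcal{X}$ and $\mathcal{Y}$ \emph{mutually} assist --- $\mathcal{X}$ jointly decodes $(Y^n,Z^n)$ from $(x^n,f,l)$ while $\mathcal{Y}$ jointly decodes $(X^n,Z^n)$ from $(y^n,f,g)$. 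This changes the rate bookkeeping (one now needs $R_\mathcal{Y}+R_\mathcal{Z}>H(YZ|X)$ as well as $R_\mathcal{X}+R_\mathcal{Z}>H(XZ|Y)$) and adds new leakage terms to control, in particular $I(\psi,g;l|\mathcal{C})$ and $I(\psi,g,l;Z^n|\mathcal{C})$, the latter requiring $R_\mathcal{X}+R_\mathcal{Y}+R_P<H(X,Y|Z)$; none of these appear in the two-message case-2 analysis. Your proposal as written would only recover the point P inside case~3, not Q, and so would not close the region.
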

We next provide general ideas for the design of achievable schemes for cases 2 and 3. The detailed proof is provided in Sections \ref{sec:proof2} and \ref{sec:proof3}.

In case 2, $R_C=\min\{R_A,R_B,R_C\}$. The outer bound in Theorem \ref{th:CapReg} is plotted in Fig.~\ref{fig:case2} as the pentagon O-E-T-P-C-O. It has been shown in \cite{Ye05con} that the corner points E, T and C are achievable. It is thus sufficient to show that the point P is achievable. Then the entire pentagon can be achieved by time sharing.

\begin{figure}[thb]
	\centering
	\includegraphics[width=3.7in]{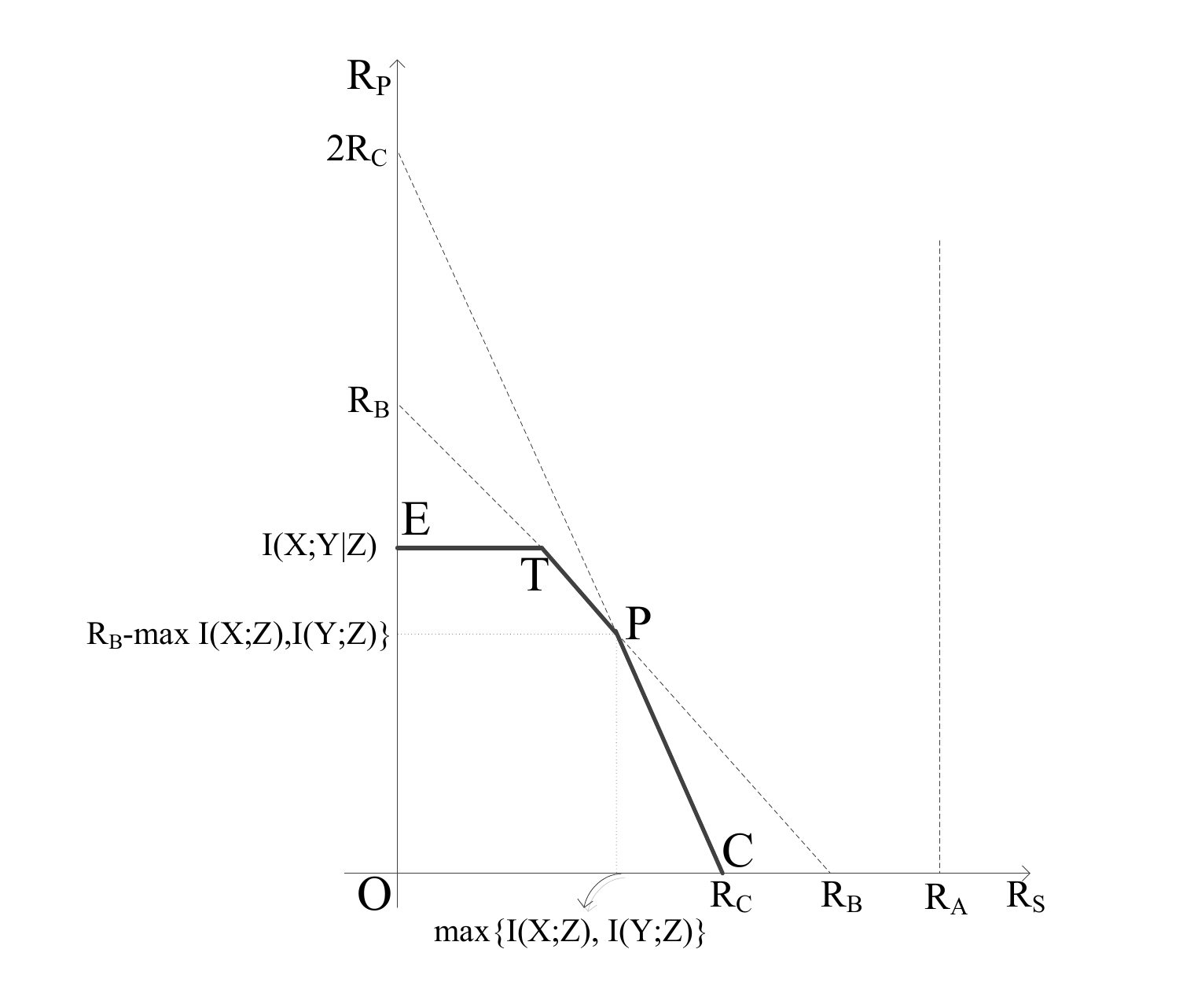}\\
	\caption{Outer bound for case 2: the pentagon O-E-T-P-C-O}
    	\label{fig:case2}
\end{figure}

\begin{figure}[thb]
	\centering
	\includegraphics[width=3.7in]{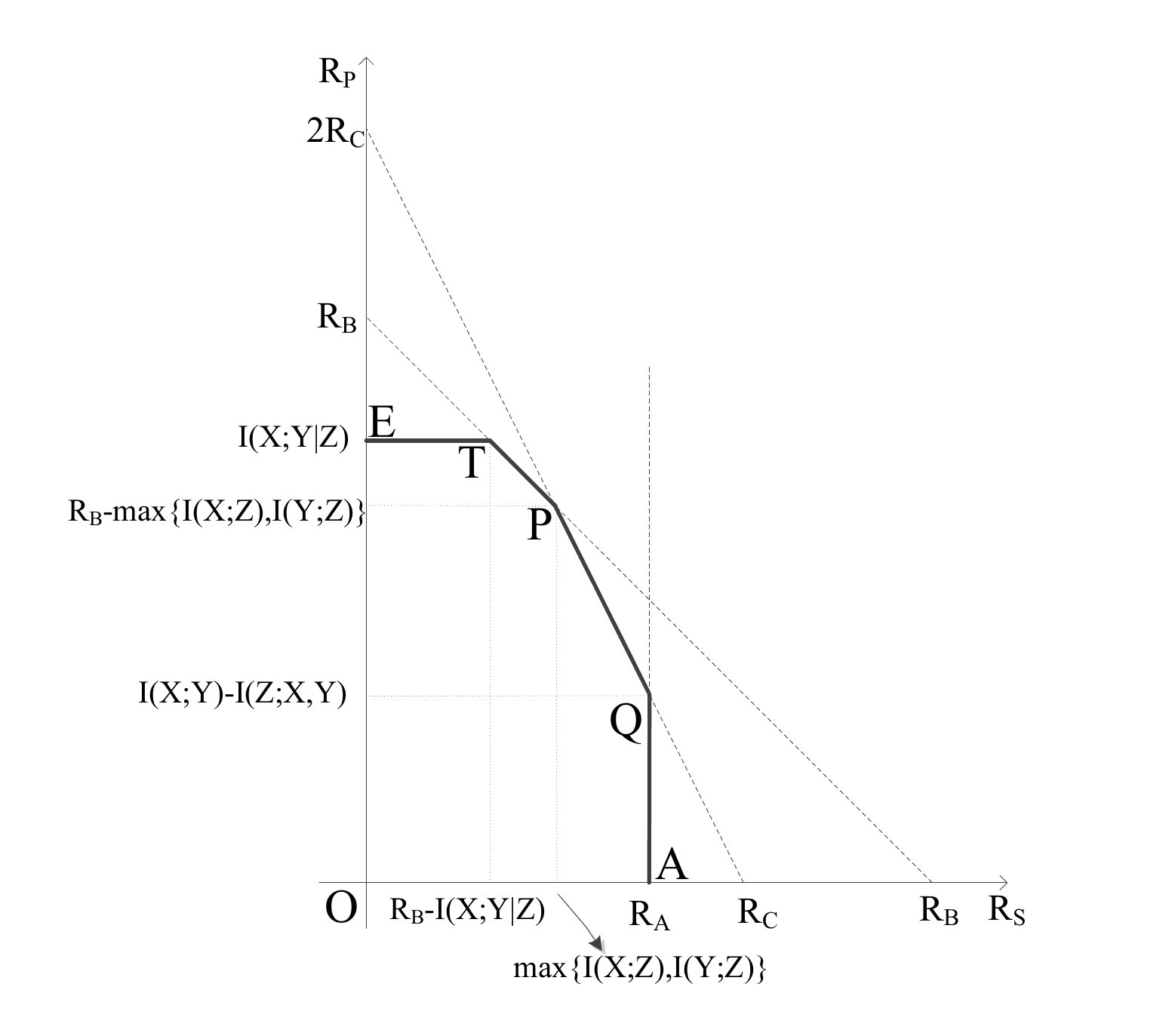}\\
	\caption{Outer bound for case 3: the hexagon O-E-T-P-Q-A-O}
    	\label{fig:case3}
\end{figure}

We note that the rate coordinates of the point P is $$\Big(\max\{I(X;Z), I(Y;Z)\}, R_B-\max\{I(X;Z), I(Y;Z)\}\Big).$$  Without loss of generality, we assume that $I(X;Z)>I(Y;Z)$ (the argument for the opposite assumption is similar), and hence $R_B=I(Y;XZ)$ and the point P becomes $(I(X;Z), I(Y;XZ)-I(X;Z))$. The SK rate $R_S=I(X;Z)$ suggests that the highest rate that $\mathcal{Z}$ can transmit publicly is $H(Z|X)$, with which $\mathcal{X}$ recovers $Z^n$, but $\mathcal{Y}$ cannot recover $Z^n$. Then $\mathcal{X}$ must transmit some information to help $\cY$ to recover $Z^n$ so that all three terminals can generate a secret key based on $Z^n$. Furthermore, the information transmitted by terminal $\cX$ also helps $\cY$ to recover $X^n$ so that $\cX$ and $\cY$ can generate a private key. The critical part of our achievable scheme lies in that terminal $\mathcal{X}$'s transmission should help $\mathcal{Y}$ to recover $Z^n$ without revealing more information about $Z^n$ to public beyond terminal $\cZ$'s transmission. Otherwise, the SK rate $R_S=I(X;Z)$ is not achievable. The idea is that $\cX$ helps $\cY$ to improve its resolvability of $Z^n$ rather than revealing information about $Z^n$ directly. Section \ref{ssec:understanding2} provides further technical intuition of the achievable scheme based on typicality arguments.



In case 3, $R_A=\min\{R_A,R_B,R_C\}$. The outer bound in Theorem \ref{th:CapReg} is plotted in Fig.~\ref{fig:case3} as the hexagon O-E-T-P-Q-A-O. It has been shown in \cite{Ye05con} that the corner points E, T and A are achievable. The point P can be achieved by applying the same scheme as in case 2. It is thus sufficient to show that the point Q is achievable. Then the entire hexagon can be achieved by time sharing.


The rate coordinates of the point Q is given by $(I(Z;XY), I(X;Y)-I(Z;XY))$. The SK rate $I(Z;XY)$ suggests that the highest rate that $\mathcal{Z}$ can transmit publicly is $H(Z|XY)$, with which neither $\mathcal{X}$ nor $\mathcal{Y}$ can recover $Z^n$. Then both $\mathcal{X}$ and $\mathcal{Y}$ must help each other to recover $Z^n$ so that all three terminals can generate a secret key based on $Z^n$. Furthermore, terminal $\cY$ also helps terminal $\cX$ to recover $Y^n$ so that $\cX$ and $\cY$ can generate a private key. The critical part lies in that $\mathcal{X}$ and $\mathcal{Y}$'s transmission help each other to recover $Z^n$ without revealing more information about $Z^n$ to public beyond terminal $\cZ$'s transmission. Otherwise, the SK rate $R_S=I(Z;XY)$ is not achievable. The idea is that $\cX$ and $\cY$ help each other to improve their resolvability of $Z^n$ rather than revealing information about $Z^n$ directly. Section \ref{ssec:understanding3} provides further technical intuition of the achievable scheme based on typicality arguments.


%
\section{Achievability Proof for Case 2}\label{sec:proof2}
In this section, we provide the achievability proof for case 2 with subsection \ref{ssec:technique2} containing the technical proof and subsection \ref{ssec:understanding2} containing further intuitive justification.

\subsection{Technical Proof}\label{ssec:technique2}
In this subsection, we show that the outer bound given in Theorem \ref{th:CapReg} for case 2 is achievable.
In this case, $R_C=\min\{R_A, R_B, R_C\}$. We assume that $R_C<R_B$, which implies
\begin{flalign}
	&I(X;Y)\le I(Z;XY),\\
	&I(X;Z)<I(Y;XZ),\label{eq:case2ass2}\\
	&I(Y;Z)<I(X;YZ).
\end{flalign}
The case of equality with $R_C=R_B$ reduces to case 1.

The outer bound for case 2 is plotted in Fig.~\ref{fig:case2} as the pentagon O-E-T-P-C-O. As we mentioned in Section \ref{ssec:mainth} it has been shown in \cite{Ye05con} that the corner points E, T and C are achievable. It is thus sufficient to show that the point P is achievable. Then the entire pentagon can be achieved by time sharing. We note that the rate coordinate corresponding to the point P is $(\max\{I(X;Z), I(Y;Z)\}, R_B-\max\{I(X;Z), I(Y;Z)\})$. Without loss of generality, we assume that $I(X;Z)>I(Y;Z)$ (the argument when $I(X;Z)<I(Y;Z)$ is similar), and hence $R_B=I(Y;XZ)$ and the point P becomes $(I(X;Z), I(Y;XZ)-I(X;Z))$. Our scheme to achieve point P is based on random binning and joint typicality.

\underline{Codebook Generation}: At terminal $\mathcal{Z}$, randomly and independently assign a bin index $f$ to each sequence $z^n\in \mathcal{Z}^n$, where $f\in [1:2^{nR_\mathcal{Z}}]$ with $R_\mathcal{Z}$ given by
\begin{equation}\label{eq:case2Rz}
R_\mathcal{Z}=H(Z|X)+\epsilon.
\end{equation}
We use $f(z^n)$ to denote the bin index of the sequence $z^n$, and use $B_\mathcal{Z}(f)$ to denote the bin indexed by $f$. Then randomly and independently assign a sub-bin index $\phi$ to each sequence in each nonempty bin $B_\mathcal{Z}(f)$, where $\phi \in [1:2^{nR_S}]$
with $R_S$ given by
\begin{equation}\label{eq:case2Rs}
R_S=I(X;Z)-2\delta(\epsilon)-2\epsilon.
\end{equation}
We further use $B_\mathcal{Z}(f,\phi)$ to denote the sub-bin indexed by $\phi$ within the bin $B_\mathcal{Z}(f)$.

At terminal $\mathcal{X}$, randomly and independently assign a bin index $g$ to each sequence $x^n\in \mathcal{X}^n$, where $g\in[1:2^{nR_\mathcal{X}}]$ with $R_\mathcal{X}$ given by
\begin{equation}\label{eq:case2Rx}
R_\mathcal{X}=H(XZ|Y)-H(Z|X).
\end{equation}
We use $g(x^n)$ to denote the bin index of the sequence $x^n$, and use $B_\mathcal{X}(g)$ to denote the bin indexed by $g$. Then randomly and independently assign a sub-bin index $\psi$ to each sequence in each nonempty bin $B_\mathcal{X}(g)$, where $\psi \in[1: 2^{nR_P}]$
with $R_P$ given by
\begin{equation}\label{eq:case2Rp}
R_P=I(XZ;Y)-I(X;Z)-2\delta(\epsilon)-\epsilon.
\end{equation}
We further use $B_\mathcal{X}(g,\psi)$ to denote the sub-bin indexed by $\psi$ within the bin $B_\mathcal{X}(g)$.

It can be verified that $R_\mathcal{X}<H(X|Z)$ and $R_P>0$ based on the case assumption \eqref{eq:case2ass2}.

This codebook assignment is known by all parties, i.e., terminals $\mathcal{X},\mathcal{Y},\mathcal{Z}$ and the eavesdropper.

\underline{Encoding and Transmission}:
Given a sequence $z^n$, terminal $\mathcal{Z}$ finds the index pair $(f,\phi)$ such that $z^n\in B_\mathcal{Z}(f,\phi)$, and then reveals the index $f=f(z^n)$ over the public channel to all parties, i.e., terminals $\mathcal{X},\mathcal{Y}$ and the eavesdropper.

Given a sequence $x^n$, terminal $\mathcal{X}$ finds the index pair $(g,\psi)$ such that $x^n\in B_\mathcal{X}(g,\psi)$, and then reveals the index $g=g(x^n)$ over the public channel to all parties, i.e., terminals $\mathcal{Y},\mathcal{Z}$ and the eavesdropper.

\underline{Decoding}:
The decoding scheme is based on the joint typicality. We use $T^{(n)}_\epsilon(P_{XYZ})$ to denote the strongly joint $\epsilon$-typical set based on the joint distribution $P_{XYZ}$.

Terminal $\mathcal{X}$, given $x^n$ and the bin index $f$, claims $\tilde{z}^n$ as recovery of $z^n$ if there exists a unique sequence $\tilde{z}^n\in B_\mathcal{Z}(f)$ that satisfies $(\tilde{z}^n,x^n) \in T^{(n)}_\epsilon(P_{XZ})$, or claims decoding failure otherwise.


Terminal $\mathcal{Y}$, given $y^n$ and the bin indexes $f$ and $g$, claims $(\hat{z}^n, \hat{x}^n)$ as recovery of $(z^n, x^n)$, if there exist a unique pair of sequences $(\hat{z}^n,\hat{x}^n)$ such that $\hat{z}^n\in B_\mathcal{Z}(f)$, $\hat{x}^n\in B_\mathcal{X}(g)$, and $(\hat{x}^n,\hat{z}^n,y^n)\in T^{(n)}_\epsilon(P_{XYZ})$, or claims decoding failure otherwise.

Due to \eqref{eq:case2Rz} and \eqref{eq:case2Rx}, it can be verified that $R_\mathcal{Z}>H(Z|XY)$, $R_\mathcal{X}>H(X|YZ)$ and $R_\mathcal{X}+R_\mathcal{Z}>H(XZ|Y)$ which implies the following inequalities hold according to the result of distributed source coding problem in \cite{Slepian73,Csiszar04,Gamal11}:
\begin{equation}\label{eq:errorx}
	\Pr\{Z^n\neq \tilde{Z}^n\}<\epsilon,
\end{equation}
\begin{equation}\label{eq:errory}
	\Pr\{X^n \neq \hat{X}^n\ or\ Z^n \neq \hat{Z}^n\}<\epsilon.
\end{equation}

\underline{Key Generation}: Terminal $\mathcal{Z}$ claims $K_S=\phi(Z^n)$. Terminal $\mathcal{X}$ claims $\tilde{K}_S=\phi(\tilde{Z}^n)$ and $K_P=\psi(X^n)$. Terminal $\mathcal{Y}$ claims $\hat{K}_S=\phi(\hat{Z}^n)$ and $\hat{K}_P=\psi(\hat{X}^n)$. Due to \eqref{eq:errorx} and \eqref{eq:errory}, we have
\begin{flalign}
	&\Pr\{K_S=\tilde{K}_S=\hat{K}_S\}>1-\epsilon,\\
 	&\Pr\{K_P=\hat{K}_P\}>1-\epsilon.
\end{flalign}

\underline{Analysis of Secrecy}:
We evaluate the leakage key rate averaged over the random codebook ensemble. Due to \eqref{eq:errorx} and \eqref{eq:errory}, in order to prove that the secrecy requirements \eqref{eq:SKsecrecy} and \eqref{eq:PKsecrecy} hold, it is sufficient to show the following two inequalities hold:
\begin{flalign}
 	&\frac{1}{n}I(K_S; \mathbf{F}|\mathcal{C})<\epsilon,\\
 	&\frac{1}{n}I(K_P; \mathbf{F}Z^n|\mathcal{C})<\epsilon.
\end{flalign}
 To simplify notations, let $f:=f(Z^n)$, and $g:=g(X^n)$. Hence $f$ and $g$ are random variables transmitted over the public channel, where the randomness is not only due to random realizations of the source sequences, but also due to random binning assignments~(i.e., random codebook generation). It is also clear that the public transmission $\mathbf{F}=\{f,g\}$. We further let $\phi:=\phi(Z^n)$ and $\psi:=\psi(X^n)$. Hence, $K_P=\psi$ and $K_S=\phi$. Then, we have
 \begin{flalign}
 	I(K_S;\mathbf{F}|\mathcal{C})&=I(\phi;f,g|\mathcal{C}) \nn \\
 	&=I(\phi;f|\mathcal{C})+I(\phi;g|f,\mathcal{C}) \nn \\
 	&\le I(\phi;f|\mathcal{C})+I(\phi,f;g|\mathcal{C}) \nn \\
 	&\stackrel{(a)}{\le} I(\phi;f|\mathcal{C})+I(Z^n;g|\mathcal{C}) \label{eq:case2ksbound}
\end{flalign}
where (a) follows from the fact that given $z^n$ and codebook $\mathcal{C}$, $f$ and $\phi$ are deterministic and independent from all other variables. We further derive
\begin{flalign}
 	I(K_P;\mathbf{F},Z^n|\mathcal{C})&=I(\psi;f,g,Z^n|\mathcal{C}) \nn \\
 	&=I(\psi;g,Z^n|\mathcal{C}) \nn \\
 	&=I(\psi;g|\mathcal{C})+I(\psi;Z^n|g,\mathcal{C}) \nn \\
 	&\le I(\psi;g|\mathcal{C})+I(\psi,g;Z^n|\mathcal{C}). \label{eq:case2kpbound}
\end{flalign}
We next show that each of the three terms $I(\phi;f|\mathcal{C}), I(\psi;g|\mathcal{C})$ and $I(\psi,g;Z^n|\mathcal{C})$ can be arbitrarily small for $n$ large enough. We first consider
\begin{equation*}
	\begin{split}
		I(\phi;f|\mathcal{C})&=I(\phi,Z^n;f|\mathcal{C})-I(Z^n;f|\phi,\mathcal{C})\\
		&=I(Z^n;f|\mathcal{C})-I(Z^n;f|\phi,\mathcal{C})\\
		&=H(Z^n|\mathcal{C})-H(Z^n|f,\mathcal{C})-H(Z^n|\phi,\mathcal{C})+H(Z^n|f,\phi,\mathcal{C}).
	\end{split}
\end{equation*}
It is clear that
\begin{flalign*}
	H(Z^n|f,\mathcal{C})&=H(Z^n,f|\mathcal{C})-H(f|\mathcal{C})=H(Z^n|\mathcal{C})-H(f|\mathcal{C})\ge H(Z^n|\mathcal{C})-nR_\mathcal{Z}
\end{flalign*}
Similarly, we have $H(Z^n|\phi,\mathcal{C})\ge H(Z^n|\mathcal{C})-nR_S$. Thus,
\begin{equation}
I(\phi;f|\mathcal{C})\le n(R_S+R_\mathcal{Z}-H(Z))+H(Z^n|f,\phi,\mathcal{C})
\end{equation}
where we used the fact that $Z^n$ is independent from $\mathcal{C}$, and hence $H(Z^n|\mathcal{C})=nH(Z)$.
In order to bound the last term, we introduce the following useful lemma.
\begin{lemma}\label{le:1}
	If $R_S+R_\mathcal{Z}<H(Z)-2\delta(\epsilon)$, then
	\begin{equation*}
 		\limsup_{n \rightarrow \infty}\frac{1}{n} H(Z^n|f,\phi,\mathcal{C})<H(Z)-R_S-R_\mathcal{Z}+\delta(\epsilon)
	\end{equation*}
\end{lemma}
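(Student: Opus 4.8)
The plan is to exploit that $(f,\phi)$ is a deterministic function of $(Z^n,\mathcal{C})$, so that conditioning on the joint sub-bin index removes exactly its entropy, and then to argue that the roughly $2^{n(R_S+R_\mathcal{Z})}$ sub-bins are nearly equally populated by typical sequences, so the residual entropy of $Z^n$ drops by almost the full $n(R_S+R_\mathcal{Z})$. Write $J:=(f,\phi)$ and let $E=\{Z^n\in T^{(n)}_\epsilon(P_Z)\}$. First I would peel off the typicality indicator,
\[
H(Z^n\,|\,J,\mathcal{C})\le H(\mathbf{1}_E\,|\,J,\mathcal{C})+\Pr(E)\,H(Z^n\,|\,J,\mathcal{C},E)+\Pr(E^c)\,H(Z^n\,|\,J,\mathcal{C},E^c),
\]
where the first term is at most one bit and the last is at most $\epsilon\,n\log|\mathcal{Z}|$ (since $\Pr(E^c)\le\epsilon$ and $Z^n$ takes at most $|\mathcal{Z}|^n$ values); both vanish after dividing by $n$. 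Thus everything reduces to controlling the middle term $H(Z^n\,|\,J,\mathcal{C},E)$.

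For that term I would bound conditional entropy by the logarithm of the support size: given $J=j$, a codebook $c$, and that $Z^n$ is typical, $Z^n$ ranges over the typical sequences falling in sub-bin $j$, a set of size $N_c(j)$, so $H(Z^n\,|\,J{=}j,c,E)\le\log N_c(j)$. Averaging over $(c,Z^n)$ and applying Jensen's inequality to the concave $\log$ gives
\[
H(Z^n\,|\,J,\mathcal{C},E)\le \log \mE_{c,Z^n|E}\!\big[N_c(J_c(Z^n))\big].
\]
The crux is the expected sub-bin occupancy. Because $f$ and then $\phi$ are assigned independently and uniformly, any two distinct sequences collide in $J$ with probability exactly $2^{-nR_\mathcal{Z}}\cdot 2^{-nR_S}=2^{-n(R_S+R_\mathcal{Z})}$; hence for every fixed typical $z^n$,
\[
\mE_{c}\!\big[N_c(J_c(z^n))\big]=1+\big(|T^{(n)}_\epsilon(P_Z)|-1\big)2^{-n(R_S+R_\mathcal{Z})}\le 1+2^{\,n(H(Z)-R_S-R_\mathcal{Z}+\delta(\epsilon))},
\]
using $|T^{(n)}_\epsilon(P_Z)|\le 2^{n(H(Z)+\delta(\epsilon))}$. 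This value is the same for every typical $z^n$, so no care about the source law inside the typical set is needed when averaging over $Z^n|E$.

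Since the hypothesis $R_S+R_\mathcal{Z}<H(Z)-2\delta(\epsilon)$ makes the exponent strictly positive, the second term dominates, $\log(\cdot)\le n(H(Z)-R_S-R_\mathcal{Z}+\delta(\epsilon))+1$, and collecting all pieces yields
\[
\limsup_{n\to\infty}\frac{1}{n}H(Z^n\,|\,f,\phi,\mathcal{C})\le H(Z)-R_S-R_\mathcal{Z}+\delta(\epsilon)+\epsilon\log|\mathcal{Z}|.
\]
Absorbing the residual $\epsilon\log|\mathcal{Z}|$ into the $\delta(\epsilon)$ slack (which tends to $0$ as $\epsilon\to0$) gives the claimed strict bound. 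I expect the main obstacle to be the middle step: correctly handling the size-biased sampling of sub-bins — the index $J$ is the sub-bin containing the actual realization, so larger sub-bins are oversampled — and justifying the codebook-averaged Jensen bound together with the per-pair collision probability for the two-level random binning.
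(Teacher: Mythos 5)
Your proof is correct, but the core step is genuinely different from the paper's. The paper (adapting Lemma 22.3 of El Gamal--Kim) also peels off the atypicality indicator, but then controls the sub-bin occupancy by a concentration argument: it treats the number $N(\mathcal{C})$ of typical sequences in a sub-bin as $\mathrm{Binomial}\bigl(|T^{(n)}_\epsilon(P_Z)|,2^{-n(R_S+R_\mathcal{Z})}\bigr)$, introduces a second bad event $\{N(\mathcal{C})\ge 2\mE[N(\mathcal{C})]\}$, kills it with Chebyshev (this is where the hypothesis $R_S+R_\mathcal{Z}<H(Z)-2\delta(\epsilon)$ enters, to make the relative variance vanish), and then bounds the entropy by $\log$ of the high-probability occupancy. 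You instead bound $H(Z^n\mid J,\mathcal{C},E)$ by $\mE[\log N_{\mathcal{C}}(J)]$ and apply Jensen, reducing everything to the first moment $\mE_c[N_c(J_c(z^n))]=1+(|T^{(n)}_\epsilon(P_Z)|-1)2^{-n(R_S+R_\mathcal{Z})}$ for a fixed typical $z^n$. This is shorter (no variance computation), uses the hypothesis only to ensure the collision term dominates the ``$+1$'', and -- a genuine merit -- handles explicitly the size-biased sampling that the paper's proof glosses over: conditioning on $(f,\phi)$ being the index of the \emph{realized} $Z^n$ tilts toward larger sub-bins, which your ``$1+$'' term accounts for exactly, whereas the paper applies the unconditional binomial law after that conditioning. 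Two small remarks: (i) you can drop the residual $\epsilon\log|\mathcal{Z}|$ without redefining $\delta(\epsilon)$, since for fixed $\epsilon$ the AEP gives $\Pr(E^c)\to 0$ as $n\to\infty$, so that term vanishes in the $\limsup$; (ii) if you want your method to replace the paper's uniformly, note that the lemma is also invoked in conditional forms such as $H(X^n\mid Z^n,\psi,g,\mathcal{C})$, where the same first-moment-plus-Jensen argument goes through but the averaging must be over the conditional typical set $T^{(n)}_\epsilon(P_{X|Z})(z^n)$.
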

\begin{proof} See Appendix \ref{app:a}. \end{proof}

Following from Lemma \ref{le:1} and \eqref{eq:case2Rz} and \eqref{eq:case2Rs}, we have
\begin{equation}
 	\frac{1}{n} I(\phi;f|\mathcal{C})<\delta(\epsilon) \label{eq:case2term1}
\end{equation}
for sufficiently large $n$.
Following the same arguments, we show that
\begin{equation}
 	\frac{1}{n}I(\psi;g|\mathcal{C})<\delta(\epsilon) \label{eq:case2term2}
\end{equation}
for sufficiently large $n$.

We then consider the term $I(\psi,g;Z^n|\mathcal{C})$ and have
\begin{flalign*}
	I(\psi,g;Z^n|\mathcal{C})&=I(\psi,g,X^n;Z^n|\mathcal{C})-I(X^n;Z^n|\psi,g,\mathcal{C})\\
	&=I(X^n;Z^n|\mathcal{C})-I(X^n;Z^n|\psi,g,\mathcal{C})\\
	&=H(X^n|\mathcal{C})-H(X^n|Z^n,\mathcal{C})-H(X^n|\psi,g,\mathcal{C})+H(X^n|Z^n,\psi,g,\mathcal{C})
\end{flalign*}
where	
\begin{flalign*}
	H(X^n|\psi,g,\mathcal{C})&=H(X^n,\psi,g|\mathcal{C})-H(\psi,g|\mathcal{C})\\
	&=H(X^n|\mathcal{C})-H(\psi,g|\mathcal{C})\\
	&\ge H(X^n|\mathcal{C})-n(R_\mathcal{X}+R_P).
\end{flalign*}
Hence,
\begin{equation*}
	I(\psi,g;Z^n|\mathcal{C})\le n(R_\mathcal{X}+R_P-H(X|Z))+H(X^n|Z^n,\psi,g,\mathcal{C})
\end{equation*}

Similarly to Lemma \ref{le:1}, we can show that if
\begin{equation}\label{eq:case2req3}
	R_\mathcal{X}+R_P<H(X|Z)-2\delta(\epsilon),
\end{equation}
then,
\begin{equation}
	\limsup_{n\rightarrow \infty}\frac{1}{n} H(X^n|Z^n,\psi,g,\mathcal{C})<H(X|Z)-R_\mathcal{X}-R_P+\delta(\epsilon).
\end{equation}
Consequently,
\begin{equation}
	\frac{1}{n} I(\psi,g;Z^n|\mathcal{C})<\delta(\epsilon) \label{eq:case2term3}
\end{equation}
for sufficiently large $n$.  This also implies that
\begin{equation}
	\frac{1}{n} I(g;Z^n|\mathcal{C})<\delta(\epsilon) \label{eq:case2term4}
\end{equation}
for sufficiently large $n$. Therefore, substituting \eqref{eq:case2term1}, \eqref{eq:case2term2}, \eqref{eq:case2term3} and \eqref{eq:case2term4} into \eqref{eq:case2ksbound} and \eqref{eq:case2kpbound}, we show that the leakage rates vanish for large enough $n$.

\underline{Uniformity}: Following from Lemma 22.2 in \cite{Gamal11}, we conclude that if $R_S<H(Z)-4\delta(\epsilon)$, then
\begin{equation}
\liminf_{n \rightarrow \infty}\frac{1}{n} H(K_S|\mathcal{C})\ge R_S-\delta(\epsilon),
\end{equation}
and if $R_P<H(X)-4\delta(\epsilon)$, then
\begin{equation}
\liminf_{n \rightarrow \infty}\frac{1}{n} H(K_P|\mathcal{C})\ge R_P-\delta(\epsilon),
\end{equation}
which prove the uniformity of the two keys.

\underline{Existence of a Codebook}:
We finally note that we have shown that
\begin{flalign*}
&\Pr\{K_S=\tilde{K}_S=\hat{K}_S\}+\Pr\{K_P=\hat{K}_P\}+I(K_S;\mathbf{F}|\mathcal{C})+I(K_P;\mathcal{F}Z^n|\mathcal{C})\\
&+\big[R_S-\frac{1}{n}H(K_S|\mathcal{C})\big]+\big[R_P-\frac{1}{n}H(K_P|\mathcal{C})\big]
\end{flalign*}
converges to zero as $n\rightarrow \infty$. This implies
\begin{flalign*}
&\mE_\mathcal{C}\Big\{ \Pr\{K_S=\tilde{K}_S=\hat{K}_S\big|\mathcal{C}=c\}+\Pr\{K_P=\hat{K}_P\big|\mathcal{C}=c\}+I(K_S;\mathbf{F}|\mathcal{C}=c)+I(K_P;\mathcal{F}Z^n|\mathcal{C}=c)\\
&+\big[R_S-\frac{1}{n}H(K_S|\mathcal{C}=c)\big]+\big[R_P-\frac{1}{n}H(K_P|\mathcal{C}=c)\big]\Big\}
\end{flalign*}
converges to zero as $n\rightarrow \infty$. Thus, there must exist one codebook $\mathcal{C}$ such that each term converges to zero as $n\rightarrow \infty$ due to non-negativity of all terms. Therefore, such a codebook satisfies all requirements simultaneously.

\subsection{Intuitive Justification of Secrecy}\label{ssec:understanding2}

%
%
 In this subsection, we intuitively explain that the generated $K_S$ and $K_P$ satisfy the secrecy requirements \eqref{eq:SKsecrecy} and \eqref{eq:PKsecrecy}.

We first justify that $K_S$, which is set as $\phi(Z^n)$, is almost independent from the public communication. Firstly, $\phi(Z^n)$, as the sub-bin index, is assigned independently from the bin index $f(Z^n)$, and hence is almost independent from the public transmission by $\mathcal{Z}$. It is then sufficient to justify that $\phi$ is almost independent of $g(X^n)$ given $f$. Given $g$, the bin $B_\mathcal{X}(g)$ contains $2^{nI(XZ;Y)}$ typical $x^n$ sequences on average. This implies that there are the same number $2^{n(I(XZ;Y)-I(X;Z))} \ge 1$ of $x^n$ that are jointly typical with any typical $z^n$ in $B_\mathcal{Z}(f)$. Hence, the bin index $g$ does not distinguish among $z^n$ within the bin $B_\mathcal{Z}(f)$, and hence does not distinguish among $\phi(z^n)$. On the other hand, if the alphabet of $g$ is too large such that $|B_\mathcal{X}(g)|<2^{nI(X;Z)}$, then there must exist some $z^n$ in bin $B_\mathcal{Z}(f)$ for which joint typical $x^n$ does not exist in the bin $B_\mathcal{X}(g)$. In this case, $g$ reveals some information about $z^n$, which can suggest exclusion of $\phi$ indices of those $z^n$ from being the key.


We then justify that $K_P$, which is set as $\psi(X^n)$, is almost independent from the public communication and $Z^n$. It is clear that $\psi(X^n)$ is independent from $g(X^n)$. It is then sufficient to justify that $\psi$ is almost independent from $Z^n$ given $g$. On average, any sub-bin within the bin $g$ contains $2^{nI(X;Z)}$ typical sequences $x^n$. This implies that there exists one $x^n$ (on average) in each sub-bin that is jointly typical with a typical $z^n$. Hence, knowing $Z^n$ does not distinguish among sub-bins of $x^n$. This justifies $\psi(X^n)$ is almost independent from $Z^n$.


\section{Achievability Proof for Case 3}\label{sec:proof3}
In this section, we provide the achievability proof for case 2 with subsection \ref{ssec:technique3} containing the technical proof and subsection \ref{ssec:understanding3} containing further intuitive justification.

\subsection{Technical Proof}\label{ssec:technique3}
In this subsection, we show that the outer bound given in Theorem \ref{th:CapReg} for case 3 is achievable.
In case 3,  $R_A=\min\{R_A, R_B, R_C\}$. In fact, the only possible case is $R_A<R_C\le R_B$ \cite{Ye05con}, which implies
\begin{flalign}
	&I(X;Y)>I(Z;XY),\label{eq:case3ass1}\\
	&I(X;Z)\le I(Y;XZ),\\
	&I(Y;Z)\le I(X;YZ).
\end{flalign}
The case of equality with $R_A=R_C$ reduces to case 2.

The outer bound in Theorem \ref{th:CapReg} for case 3 is plotted in Fig.~\ref{fig:case3} as the hexagon O-E-T-P-Q-A-O. It has been shown in \cite{Ye05con} that the corner points E, T and A are achievable.  The point P can be achieved by the same scheme as in case 2. It is thus sufficient to show that the point Q whose rate coordinates are given by $(I(Z;XY), I(X;Y)-I(Z;XY))$, is achievable. Then the entire hexagon can be achieved by time sharing. Our scheme to achieve the point Q is based on random binning and joint typicality.

\underline{Codebook Generation}:
At terminal $\mathcal{Z}$, randomly and independently assign a bin index $f$ to each sequence $z^n\in \mathcal{Z}^n$, where $f\in [1:2^{nR_\mathcal{Z}}]$ with $R_\mathcal{Z}$ given by
\begin{equation}\label{eq:case3Rz}
R_\mathcal{Z}=H(Z|X,Y)+\epsilon+2\delta(\epsilon).
\end{equation}
We use $f(z^n)$ to denote the bin index of the sequence $z^n$, and use $B_\mathcal{Z}(f)$ to denote the bin indexed by $f$. Then randomly and independently assign a sub-bin index $\phi$ to each sequence in each nonempty bin $B_\mathcal{Z}(f)$, where $\phi \in [1:2^{nR_S}]$
with $R_S$ given by
\begin{equation}\label{eq:case3Rs}
R_S=I(Z;XY)-2\epsilon-4\delta(\epsilon).
\end{equation}
We further use $B_\mathcal{Z}(f,\phi)$ to denote the sub-bin indexed by $\phi$ within the bin $B_\mathcal{Z}(f)$.

At terminal $\mathcal{X}$, randomly and independently assign a bin index $g$ to each sequence $x^n\in \mathcal{X}^n$, where $g\in[1:2^{nR_\mathcal{X}}]$ with $R_\mathcal{X}$ given by
\begin{equation} \label{eq:case3Rx}
	R_\mathcal{X}=H(X|Y)+\epsilon.
\end{equation}
We use $g(x^n)$ to denote the bin index of the sequence $x^n$, and use $B_\mathcal{X}(g)$ to denote the bin indexed by $g$. Then randomly and independently assign a sub-bin index $\psi$ to each sequence in each nonempty bin $B_\mathcal{X}(g)$, where $\psi \in [1:2^{nR_P}]$
with $R_P$ given by
\begin{equation}\label{eq:case3Rp}
R_P=I(X;Y)-I(Z;XY)-2\epsilon-2\delta(\epsilon).
\end{equation}
We further use $B_\mathcal{X}(g,\psi)$ to denote the sub-bin indexed by $\psi$ within the bin $B_\mathcal{X}(g)$.

At terminal $\mathcal{Y}$, randomly and independently assign a bin index $l$ to each sequence $y^n\in \mathcal{Y}^n$, where $l\in[1:2^{nR_\mathcal{Y}}]$ with $R_\mathcal{Y}$ given by
\begin{equation} \label{eq:case3Ry}
 	R_\mathcal{Y}=H(Y|X)-2\delta(\epsilon).
\end{equation}
We use $l(y^n)$ to denote the bin index of the sequence $y^n$, and use $B_\mathcal{Y}(l)$ to denote the bin indexed by $l$.

It can be verified that $R_P>0$ based on the case assumption \eqref{eq:case3ass1}.

This codebook assignment is revealed to all parties, i.e., terminals $\mathcal{X}, \mathcal{Y}, \mathcal{Z}$ and the eavesdropper.

\underline{Encoding and Transmission}:
Given a sequence $z^n$, terminal $\mathcal{Z}$ finds the index pair $(f,\phi)$ such that $z^n\in B_\mathcal{Z}(f,\phi)$, and then reveals the index $f=f(z^n)$ over the public channel to all parties, i.e., terminals $\mathcal{X},\mathcal{Y}$ and the eavesdropper.
Given a sequence $x^n$, terminal $\mathcal{X}$ finds the index pair $(g,\psi)$ such that $x^n\in B_\mathcal{X}(g,\psi)$, and then reveals the index $g=g(x^n)$ over the public channel to all parties, i.e., terminals $\mathcal{Y},\mathcal{Z}$ and the eavesdropper.
Given a sequence $y^n$, terminal $\mathcal{Y}$ finds the index $l$ such that $y^n \in B_\mathcal{Y}(l)$, and then reveals the index $l=l(y^n)$ over the public channel to all parties, i.e., terminals $\mathcal{X},\mathcal{Z}$ and the eavesdropper.

\underline{Decoding}:
Terminal $\mathcal{X}$, given $x^n$ and the bin indexes $f$ of $z^n$ and $l$ of $y^n$, claims $(\tilde{z}^n,\tilde{y}^n)$ as recovery of $(z^n,y^n)$ if there exists a unique pair $(\tilde{z}^n, \tilde{y}^n)$ such that $\tilde{z}^n\in B_\mathcal{Z}(f)$, $\tilde{y}^n\in B_\mathcal{Y}(l)$, and $(\tilde{z}^n, \tilde{y}^n, x^n) \in T^{(n)}_\epsilon(P_{XYZ})$, or claims decoding failure otherwise.

Terminal $\mathcal{Y}$, given $y^n$ and the bin indexes $f$ of $z^n$ and $g$ of $x^n$, claims $(\hat{z}^n,\hat{x}^n)$ as recovery of $(z^n,x^n)$ if there exists a unique pair $(\hat{z}^n, \hat{x}^n)$ such that $\hat{z}^n\in B_\mathcal{Z}(f)$, $\hat{x}^n\in B_\mathcal{X}(g)$, and $(\hat{z}^n, \hat{x}^n, y^n) \in T^{(n)}_\epsilon(P_{XYZ})$, or claims decoding failure otherwise.

We further assume that
\begin{flalign}
	H(Y|X)>H(Y|XZ). \label{eq:case3aspt}
\end{flalign}
The case of equality implies that the point Q coincides with the point P, and can hence be achieved using the scheme given in case 2.
Then due to \eqref{eq:case3Rz}, \eqref{eq:case3Rx} and \eqref{eq:case3Ry}, it can be verified that $R_\mathcal{Z}>H(Z|XY)$, $R_\mathcal{X}>H(X|YZ)$, $R_\mathcal{Y}>H(Y|XZ)$, $R_\mathcal{X}+R_\mathcal{Z}>H(XZ|Y)$ and $R_\mathcal{Y}+R_\mathcal{Z}>H(YZ|X)$ hold. It can then be shown that the following inequalities hold, according to the result of distributed source coding problem in \cite{Slepian73, Csiszar04, Gamal11}:
\begin{flalign}
	&\Pr\{Z^n\neq \tilde{Z}^n\ or\ Y^n \neq \tilde{Y}^n\}<\epsilon, \label{eq:case3errorx}\\
	&\Pr\{X^n \neq \hat{X}^n\ or\ Z^n \neq \hat{Z}^n\}<\epsilon. \label{eq:case3errory}
\end{flalign}

\underline{Key Generation}: Terminal $\mathcal{Z}$ claims $K_S=\phi(Z^n)$. Terminal $\mathcal{X}$ claims $\tilde{K}_S=\phi(\tilde{Z}^n)$ and $K_P=\psi(X^n)$. Terminal $\mathcal{Y}$ claims $\hat{K}_S=\phi(\hat{Z}^n)$ and $\hat{K}_P=\psi(\hat{X}^n)$. Due to \eqref{eq:case3errorx} and \eqref{eq:case3errory}, we have
\begin{flalign}
	&\Pr\{K_S=\tilde{K}_S=\hat{K}_S\}>1-\epsilon,\\
 	&\Pr\{K_P=\hat{K}_P\}>1-\epsilon.
\end{flalign}

\underline{Analysis of Secrecy}:
We evaluate  the key leakage rates averaged over the random codebook ensemble as follows. We let $l:=l(Y^n)$ and now $\mathbf{F}=\{f,g,l\}$. We then derive the following bounds:
\begin{flalign}
	I(K_S;\mathbf{F}|\mathcal{C})&=I(\phi;f,g,l|\mathcal{C}) \nn\\
 	&=I(\phi;f|\mathcal{C})+I(\phi;g,l|f,\mathcal{C}) \nn\\
 	&\le I(\phi;f|\mathcal{C})+I(\phi,f;g,l|\mathcal{C}) \nn\\
 	&\le I(\phi;f|\mathcal{C})+I(Z^n;g,l|\mathcal{C}) \label{eq:case3ksbound}
\end{flalign}
\begin{flalign}
 		I(K_P;\mathbf{F},Z^n|\mathcal{C})&=I(\psi;f,g,l,Z^n|\mathcal{C})\nn\\
 &=I(\psi;g,l,Z^n|\mathcal{C}) \nn\\
 		&= I(\psi;g|\mathcal{C})+I(\psi;l|g,\mathcal{C})+I(\psi;Z^n|g,l,\mathcal{C}) \nn\\
 		&\le I(\psi;g|\mathcal{C})+I(\psi,g;l|\mathcal{C})+I(\psi,g,l;Z^n|\mathcal{C}). \label{eq:case3kpbound}
\end{flalign}
We next show that each of the four terms $I(\phi;f|\mathcal{C}), I(\psi;g|\mathcal{C})$, $I(\psi,g;l|\mathcal{C})$ and $I(\psi,g,l;Z^n|\mathcal{C})$ can be arbitrarily small for large enough $n$. Following the same steps as in case 2 we can show that
\begin{flalign}
	&I(\phi;f|\mathcal{C})<\delta(\epsilon),\label{eq:case3term1}\\
	&I(\psi;g|\mathcal{C})<\delta(\epsilon) \label{eq:case3term2}
\end{flalign}
for large enough $n$. We then consider the term $I(\psi,g;l|\mathcal{C})$, and have
\begin{flalign*}
	I(\psi,g;l|\mathcal{C})&\le I(\psi,g,X^n;l|\mathcal{C}) \\&=I(X^n;l|\mathcal{C})\\
	&=I(X^n;l,Y^n|\mathcal{C})-I(X^n;Y^n|l,\mathcal{C})\\
	&=I(X^n;Y^n|\mathcal{C})-I(X^n;Y^n|l,\mathcal{C})\\
	&=H(Y^n|\mathcal{C})-H(Y^n|X^n,\mathcal{C})-H(Y^n|l,\mathcal{C})+H(Y^n|X^n,l,\mathcal{C})\\
	&\stackrel{(a)}{\le} H(Y^n|\mathcal{C})-H(Y^n|X^n,\mathcal{C})-(H(Y^n|\mathcal{C})-nR_\mathcal{Y})+H(Y^n|X^n,l,\mathcal{C})\\
	&\stackrel{(b)}{=} n(R_\mathcal{Y}-H(Y|X))+H(Y^n|X^n,l,\mathcal{C})
\end{flalign*}
where (a) follows because
\begin{equation*}
H(Y^n|l,\mathcal{C})=H(Y^n,l|\mathcal{C})-H(l|\mathcal{C})=H(Y^n|\mathcal{C})-H(l|\mathcal{C})\ge H(Y^n|\mathcal{C})-nR_\mathcal{Y},
\end{equation*}
and (b) follows because
\begin{equation*}
H(Y^n|X^n,\mathcal{C})=H(Y^n|X^n)=nH(Y|X).
\end{equation*}

Similarly to Lemma \ref{le:1}, we can show that if
\begin{equation}\label{eq:case3req1}
	R_\mathcal{Y}\le H(Y|X)-2\delta(\epsilon),
\end{equation}
then
\begin{equation}
	\limsup_{n \rightarrow \infty}\frac{1}{n}H(Y^n|X^n,l,\mathcal{C})<H(Y|X)-R_\mathcal{Y}+\delta(\epsilon).
\end{equation}
Consequently, we obtain
\begin{equation}
 	\frac{1}{n}I(\psi,g;l|\mathcal{C})<\delta(\epsilon) \label{eq:case3term3}
\end{equation}
for sufficiently large $n$.

For the term $I(\psi,g,l; Z^n|\mathcal{C})$, we derive the following bound:
\begin{flalign*}
		&I(\psi,g,l; Z^n|\mathcal{C})\\
		&=I(\psi,g,l,X^n,Y^n;Z^n|\mathcal{C})-I(X^n,Y^n;Z^n|\psi,g,l,\mathcal{C})\\
		&=I(X^n,Y^n;Z^n|\mathcal{C})-I(X^n,Y^n;Z^n|\psi,g,l,\mathcal{C})\\
		&=H(X^n,Y^n|\mathcal{C})-H(X^n,Y^n|Z^n,\mathcal{C})-H(X^n,Y^n|\psi,g,l,\mathcal{C})+H(X^n,Y^n|Z^n,\psi,g,l,\mathcal{C})\\
		&\stackrel{(a)}{\le}H(X^n,Y^n|\mathcal{C})-H(X^n,Y^n|Z^n,\mathcal{C})\\
		&-(H(X^n,Y^n|\mathcal{C})-n(R_\mathcal{X}+R_\mathcal{Y}+R_P))+H(X^n,Y^n|Z^n,\psi,g,l,\mathcal{C})\\
		&\stackrel{(b)}{=} n(-H(X,Y|Z)+R_\mathcal{X}+R_\mathcal{Y}+R_P)+H(X^n,Y^n|Z^n,\psi,g,l,\mathcal{C})
\end{flalign*}
where (a) follows because
\begin{flalign*}
	H(X^n,Y^n|\psi,g,l,\mathcal{C})&=H(X^n,Y^n,\psi,g,l|\mathcal{C})-H(\psi,g,l|\mathcal{C})\\
	&=H(X^n,Y^n|\mathcal{C})-H(\psi,g,l|\mathcal{C})\\
	&\ge H(X^n,Y^n|\mathcal{C})-n(R_\mathcal{X}+R_\mathcal{Y}+R_P),
\end{flalign*}
and (b) follows because
\begin{equation*}
H(X^n,Y^n|Z^n,\mathcal{C})=H(X^n,Y^n|Z^n)=nH(X,Y|Z).
\end{equation*}
Similarly to Lemma \ref{le:1}, we can show that if
\begin{equation}\label{eq:case3req2}
	R_\mathcal{X}+R_\mathcal{Y}+R_P<H(X,Y|Z)-2\delta(\epsilon),
\end{equation}
then,
\begin{equation}
	\limsup_{n \rightarrow \infty}\frac{1}{n}H(X^n,Y^n|Z^n,\psi,g,l,\mathcal{C})<H(X,Y|Z)-R_\mathcal{X}-R_\mathcal{Y}-R_P+\delta(\epsilon).
\end{equation}
Consequently, we have
\begin{equation}
	\frac{1}{n}I(\psi,g,l;Z^n|\mathcal{C})<\delta(\epsilon) \label{eq:case3term4}
\end{equation}
for large enough $n$. This also implies that
\begin{equation}
	\frac{1}{n} I(g,l;Z^n|\mathcal{C})<\delta(\epsilon) \label{eq:case3term5}
\end{equation}
for sufficiently large $n$. Therefore, substituting \eqref{eq:case3term1}, \eqref{eq:case3term2}, \eqref{eq:case3term3}, \eqref{eq:case3term4} and \eqref{eq:case3term5} into \eqref{eq:case3ksbound} and \eqref{eq:case3kpbound}, we show that the leakage rates vanish for large enough $n$.

\underline{Uniformity}: Following from Lemma 22.2 in \cite{Gamal11}, we conclude that if $R_S<H(Z)-4\delta(\epsilon)$, then
\begin{equation*}
	\liminf_{n \rightarrow \infty}\frac{1}{n} H(K_S|\mathcal{C})\ge R_S-\delta(\epsilon),
\end{equation*}
and if $R_P<H(X)-4\delta(\epsilon)$, then
\begin{equation*}
	\liminf_{n \rightarrow \infty}\frac{1}{n} H(K_P|\mathcal{C})\ge R_P-\delta(\epsilon),
\end{equation*}
which prove the uniformity of the two keys.

%

\underline{Existence of a Codebook}: This can be argued in the similar way as for case 2.

\subsection{Intuitive Justification of Secrecy}\label{ssec:understanding3}

%
In this subsection, we intuitively explain that the generated secret and private keys $K_S$ and $K_P$ satisfy the secrecy requirements \eqref{eq:SKsecrecy} and \eqref{eq:PKsecrecy}.

We first justify that $K_S$, which is set as $\phi(Z^n)$, is almost independent from the public communication. Firstly, it is clear that $\phi(Z^n)$  is almost independent from $f$. It is then sufficient to justify that $\phi(Z^n)$ is almost independent of $g(X^n)$ and $l(Y^n)$ given $f(Z^n)$.
For any given $g$ and $l$, there are on average $2^{nI(X;Y)}$ jointly typical pairs of $(x^n, y^n)$ such that $(x^n,y^n)\in B_\mathcal{X}(g)\times B_\mathcal{Y}(l)$. This implies that there are $2^{n(I(X;Y)-I(XY;Z))}\ge 1$ pairs of $(x^n,y^n)$ being jointly typical with any typical $z^n$ in $B_\mathcal{Z}(f)$. Hence, the bin indexes $g$ and $l$ do not distinguish among $z^n$ within the bin $B_\mathcal{Z}(f)$, and hence do not distinguish among the index $\phi$ of $z^n$.

We then justify that $K_P$, which is set as $\psi(X^n)$, is almost independent from the public communication and $Z^n$. It is clear that $\psi(X^n)$ is independent from $g(X^n)$. Similarly to case 2, we can argue that $\psi$ is almost independent from the bin index $l$ of $Y^n$ given $g$. It is then sufficient to justify that $\psi$ is independent from $Z^n$ given $l$ and $g$. On average, any sub-bin $B_\mathcal{X}(g,\psi)$ within the bin $B_\mathcal{X}(g)$ contains $2^{nI(XY;Z)}$ typical sequences $x^n$, and thus for a given $l$, there are $2^{nI(XY;Z)}$ jointly typical pairs of $(x^n,y^n)$ in each $B_\mathcal{X}(g,\psi)\times B_\mathcal{Y}(l)$ for any $\psi$. This implies that there exists one pair $(x^n, y^n)$ (on average) in each $B_\mathcal{X}(g,\psi)\times B_\mathcal{Y}(l)$ that is jointly typical with a typical $z^n$. Hence, given $l$ and $g$, knowing $Z^n$ does not distinguish among sub-bins of $x^n$. This justifies that, knowing $Z^n$ and the bin index pair $(g,l)$, one does not have preference of determining the sub-bin in which the true $X^n$ lies.


\section{Conclusion}\label{sec:conclusion}


In this paper, we have studied the three-terminal source-type model of simultaneously generating a secret and private key pair. We have shown that random binning and joint decoding schemes achieve an existing outer bound on the SK-PK capacity region established in \cite{Ye05con} for two cases. Hence, jointly with the capacity region established in \cite{Ye05con} for one case, the SK-PK capacity region for this model is characterized in general. As future work, we will extend this study to more general networks with more than three terminals. We will also apply the idea of our achievable schemes to other multi-key generation source models.

\vspace{10mm}

\appendix

\noindent {\Large \textbf{Appendix}}

\section{Proof of Lemma \ref{le:1}} \label{app:a}
The proof adapts the proof of Lemma 22.3 in \cite{Gamal11} with variations. For the sake of completeness, we provide the detail here.

Let
\begin{equation}
	E_1=\begin{cases}
		1, &\text{ if }Z^n\notin T^{(n)}_\epsilon(P_Z),\\
		0, &\text{ otherwise}.
	\end{cases}
\end{equation}
Hence $\Pr\{E_1=1\}\rightarrow 0$ as $n\rightarrow \infty$.

We have the following bound:
\begin{flalign}
	&H(Z^n|f,\phi, \mathcal{C}) \nn \\
	&\le H(Z^n, E_1|f,\phi,\mathcal{C}) \nn \\
	&=H(E_1|f,\phi,\mathcal{C})+H(Z^n|E_1,f,\phi,\mathcal{C}) \nn \\
	&\le 1+n\Pr\{E_1=1\}\log|\mathcal{Z}|+H(Z^n|E_1=0,f,\phi,\mathcal{C}) \nn \\
	&=1+n\Pr\{E_1=1\}\log|\mathcal{Z}|+\sum_{(i,j)} \Pr\left(f=i,\phi=j\big|E_1=0\right)H(Z^n|E_1=0,f=i,\phi=j,\mathcal{C}). \label{eq:aa1}
\end{flalign}

For a given codebook $\mathcal{C}$, let $N(\mathcal{C})$ be the number of sequences $z^n\in B_\mathcal{Z}(i,j)\cap T_\epsilon^{(n)}(P_Z)$. Define
\begin{equation}
	E_2(\mathcal{C})=\begin{cases}
		1, &\text{ if }N(\mathcal{C})\ge 2\mE[N(\mathcal{C})],\\
		0, &\text{ otherwise}.
	\end{cases}
\end{equation}

Note that $N(\mathcal{C})\sim Binomial(|T_\epsilon^{(n)}(P_Z)|, 2^{-n(R_S+R_\mathcal{Z})})$. Thus,
\begin{flalign}
	\mE[N(\mathcal{C})]=2^{-n(R_S+R_\mathcal{Z})}|T_\epsilon^{(n)}(P_Z)|,\\
	\Var[N(\mathcal{C})]\le 2^{-n(R_S+R_\mathcal{Z})}|T_\epsilon^{(n)}(P_Z)|.
\end{flalign}

By Chebyshev Inequality, we have
\begin{flalign}
	\Pr\{E_2(\mathcal{C})=1\}\le \frac{\Var[N(\mathcal{C})]}{(\mE[N(\mathcal{C})])^2}\le 2^{-n[H(Z)-R_S-R_\mathcal{Z}-\delta(\epsilon)]}.
\end{flalign}

Hence, if $R_S+R_\mathcal{Z}\le H(Z)-2\delta(\epsilon)$, then $\Pr\{E_2(\mathcal{C})=1\} \rightarrow 0 \ \ as \ \ n\rightarrow \infty$. Now,
\begin{flalign}
	&H(Z^n|E_1=0,f=i,\phi=j,\mathcal{C}) \nn \\
	&\quad\quad\le H(Z^n, E_2|E_1=0,f=i,\phi=j,\mathcal{C}) \nn \\
	&\quad\quad=H(E_2|E_1=0,f=i,\phi=j,\mathcal{C})+H(Z^n|E_2, E_1=0,f=i,\phi=j,\mathcal{C})  \nn \\
	&\quad\quad\le 1+n\Pr\{E_2=1\}\log|\mathcal{Z}|+H(Z^n|E_2=0, E_1=0,f=i,\phi=j,\mathcal{C}) \nn \\
	&\quad\quad\le 1+n\Pr\{E_2=1\}\log|\mathcal{Z}|+n(H(Z)-R_S-R_\mathcal{Z}+\delta(\epsilon)) \label{eq:aa2}
\end{flalign}

Substituting \eqref{eq:aa2} into \eqref{eq:aa1}, we conclude that if
\begin{equation}
	R_S+R_\mathcal{Z}\le H(Z)-2\delta(\epsilon),
\end{equation}
then
\begin{equation}
	\frac{1}{n}H(Z^n|f,\phi,\mathcal{C})\le H(Z)-R_S-R_\mathcal{Z}+\delta(\epsilon), \ \ \ as \ \ n\rightarrow \infty.
\end{equation}

\renewcommand{\baselinestretch}{1}
\begin{small}

\bibliographystyle{unsrt}
\bibliography{security}

\end{small}



%

%
%
%


\end{document}